\def \R { \mathbb{R} }
\newcommand{\bsl}{\baselineskip}
\newcommand{\dd}{\mathrm{d}}
\newcommand{\bbt}{\boldsymbol{\beta}}
\newcommand{\sad}{{\varsigma_d}}
\newcommand{\rmin}{{r_{\min}}}
\newcommand{\mise}{\textsc{mise}}
\newcommand{\bb}{\mathbf{b}}
\newcommand{\br}{\mathbf{r}}
\newcommand{\bX}{\mathbf{X}}
\newcommand{\EE}{{\mathbb E}}
\newtheorem{theorem}{Theorem}
\newtheorem{corollary}{Corollary}
\theoremstyle{definition}
\newtheorem{remark}{Remark}
\begin{document}
\begin{frontmatter}
\title{Second-order variational equations for spatial point processes with a view to pair correlation function estimation}
\author[label1]{Jean-Fran\c{c}ois Coeurjolly\corref{cor1}}
\address[label1]{Department of Mathematics, Universite\'e du Qu\'ebec \`a Montreal (UQAM), Canada}
\ead{coeurjolly.jean-francois@uqam.ca}

\author[label2]{Francisco Cuevas-Pacheco}
\ead{francisco@math.aau.dk}

\author[label2]{Rasmus Waagepetersen}
\address[label2]{Department of Mathematical Sciences, Aalborg University, Denmark}
\ead{rw@math.aau.dk}

\cortext[cor1]{Corresponding author}

\begin{abstract}
	\noindent Second-order variational type equations for spatial point processes are established. In case of log linear parametric models for pair correlation functions, it is demonstrated that the variational equations can be applied to construct estimating equations with closed form solutions for the parameter estimates. This result is used to fit orthogonal series expansions of log pair correlation functions of general form.
\end{abstract}

\begin{keyword}
estimating equation\sep non-parametric estimation\sep  orthogonal series expansion \sep {pair correlation function \sep variational equation}. 
\end{keyword}

\end{frontmatter}


\section{Introduction}

Spatial point processes are models for sets of random locations of possibly interacting objects. Background on spatial point processes can be found in \cite{moeller:waagepetersen:04}, \cite{illian:etal:08} or \cite{baddeley:rubak:turner:15} which gives both  an accessible introduction as well as details on  implementation in the \texttt R package \texttt{spatstat}. Moments of counts of objects for spatial point processes are typically expressed in terms of so-called joint intensity functions or Papangelou conditional intensity functions which are defined via the Campbell or Georgii-Nguyen-Zessin equations (see the aforementioned references or the concise review of intensity functions and Campbell formulae in Section~\ref{sec:bck}). In this paper we consider a third type of equation called  variational equations. 

A key feature of variational equations compared to Campbell and Georgii-Nguyen-Zessin equations is that they are formulated in terms of the gradient of the log intensity or conditional intensity function rather than the (conditional) intensity itself. 
Variational equations were introduced for parameter estimation in Markov random fields by \cite{almeida:gidas:93}.
The authors suggested the terminology `variational' due to the analogy between the derivation of their estimating equation and the variational Euler-Lagrange equations in partial differential equations. The resulting equation consisted in an equilibrium equation involving the gradient of the log conditional probability of the Markov random field.
Later, \cite{baddeley:dereudre:13} obtained variational equations for Gibbs point processes and exploited them to infer a log-linear parametric model of the conditional intensity function. 
\cite{coeurjolly:moeller:14} established a first-order
variational equation for general spatial point processes and used it to estimate parameters in  a log-linear parametric model for the intensity function.

The first contribution of this paper is to establish
  second-order variational equations. The second-order properties of a
  spatial point process are characterized by the so-called pair
  correlation function which is a normalized version of the
  second-order joint intensity function. We assume that the pair
  correlation function is translation invariant and also consider the
  case when it is isotropic. Since the new variational equations are based on the gradient of the log pair correlation function, they take a particularly simple form for pair correlation functions of log-linear form.

Our second contribution is to propose a new non-parametric estimator of the pair correlation function. The classical approach is to use a kernel estimator, see e.g.~\cite{moeller:waagepetersen:04}. More recently, \cite{jalilian:guan:waagepetersen:17} investigated the estimation of the pair correlation function using an orthogonal series expansion. In the setting of their simulation studies, the orthogonal series estimator was shown to be more efficient than the standard kernel estimator. One drawback, however, is that the orthogonal series estimator is not guaranteed to be non-negative. We therefore propose to use our second-order variational equation to estimate coefficients in an orthogonal series expansion of the log pair correlation function. This ensures that the resulting  pair correlation function estimator is non-negative. We compare our new estimator with the previous ones in a simulation study and also illustrate its use on real datasets.

\section{Background and main results} \label{sec:bck}

\subsection{Spatial point processes}

Throughout this paper we let $\bX$ be a  spatial point process defined
on $\R^d$. That is, $\bX$ is a random subset of $\R^d$ with the
  property that the intersection of $\bX$ with any bounded subset of
  $\R^d$ is of finite cardinality. The joint intensity
  functions $\rho^{(k)}$, $k\ge 1$, are characterized (when they
  exist) by the Campbell formulae (equations)  \citep[see e.g.\ ][]{moeller:waagepetersen:04}:  for any $h:(\R^d)^k \to \R^+$ (with $\R^+$ the non-negative real numbers)
\begin{align}
\EE \sum_{u_1,\dots,u_k\in \bX}^{\neq} h(u_1,\dots,u_k)  &= \int \dots \int h(u_1,\dots,u_k)
\rho^{(k)}(u_1,\dots,u_k)
  \dd u_1 \dots \dd u_k.  \label{eq:campbell}
\end{align}
More intuitively, for any pairwise distinct points~$u_1,\dots,u_k \in \R^d$, \linebreak$\rho^{(k)}(u_1,\dots,u_k)\dd u_1 \cdots \dd u_k$ is the probability that for each~$i=1,\ldots,k$, $\bX$ has a point in an infinitesimally small region around~$u_i$ with volume~$\dd u_i$. The intensity function $\rho$ corresponds to the case $k=1$, i.e.\ $\rho=\rho^{(1)}$. The pair correlation function is obtained by normalizing the second-order joint intensity $\rho^{(2)}$:
  \begin{equation} \label{eq:pcf_def}
    g(u,v) = \frac{\rho^{(2)}(u,v)}{\rho(u)\rho(v)}
  \end{equation}
  for pairwise distinct $u,v$ and
  where $g(u,v)$ is set to 0 if~$\rho(u)$ or~$\rho(v)$ is
  zero. Intuitively, $g(u,v)>1$ [$g(u,v)<1$]  means that presence of a point at $u$ increases [decreases] the probability of observing a further point at $v$ and vice versa.  We assume that $\bX$ is observed on some bounded domain $W \subset \R^d$ with volume $|W|>0$ and without loss of generality we assume that $\rho(u)>0$ for all $u \in W$ (otherwise we just replace $W$ by $\{ u \in W | \rho(u)>0\}$ provided the latter set has positive volume).

We will always assume 
that $\bX$ is second-order intensity reweighted stationary
\citep{baddeley:moeller:waagepetersen:00}, meaning that its pair correlation
function $g$ is invariant by translations. We then,
with an abuse of notation,  write $g(v-u)$ for $g(u,v)$ for any $u,v\in
\R^d$. We will also consider the case of an isotropic pair correlation function in
which case $g(v-u)$ depends only on the distance $\|v-u\|$.

For the presentation of the second-order variational type
  equation in the next section some additional notation is needed.
For a function $h:\R^d \to \R$ which is differentiable on $\R^d$, we denote by 
\[
  \nabla h(w) = \left\{ \frac{\partial h}{\partial w_1}(w), \dots, \frac{\partial h}{\partial w_d}(w) \right\}^\top, \quad w\in \R^d
\]
the gradient vector with respect to the $d$ coordinates. The inner product is denoted by a `$\cdot$' and for $h: \R^d \to \R^d$, a multivariate function such that each component is differentiable on $\R^d$, we define the divergence operator
by
\[
\mathrm{div} \,h(w) = \sum_{i=1}^d \frac{\partial h_i}{\partial w_i}(w).
\]

\subsection{Second-order variational equations} \label{sec:VEstationary}


In this section, we present our new second-order variational
  equations. The proofs
  of the results are given in the Appendices.

\begin{theorem} \label{thm:VE}
Assume $\bX$ is second-order intensity reweighted stationary.
Let $h:\R^d \to \R^d$ be a componentwise continuously differentiable function on $\R^d$. Assume that $g$ is continuously differentiable on $\R^d$, that $\|h\| \|\nabla g\| \in L^1(\R^d)$, and that there exists a sequence of increasing bounded domains $(B_n)_{n\geq 1}$ such that $B_n \to \R^d$ as $n\to \infty$, with piecewise smooth boundary $\partial B_n$ and such that 
\begin{equation}
  \label{eq:condition_h}
\lim_{n\to \infty} \int_{\partial B_n} g(w) h(w) \cdot \nu(\dd w) =0
\end{equation}
where $\nu$ stands for the outer normal measure to $\partial B_n$. Then
\begin{align}
    \EE \Bigg\{ \sum_{u,v \in \bX\cap W }^{\neq} e(u,v)  \nabla \log g(v-u) \,& \cdot \,h(v-u)\Bigg\} =  \nonumber \\
   & - \EE \left\{ \sum_{u,v \in \bX \cap W }^{\neq} e(u,v)  \mathrm{div \, h}(v-u)
    \right\},\label{eq:VE1}
\end{align}
where $e:\R^d \times \R^d \to \R^+$ denotes the function $e(u,v) = \{\rho(u) \rho(v) |W \cap W_{v-u}|\}^{-1}$ for any $u,v \in \R^d$ and where $W_w$ denotes the domain $W$ translated by $w \in \R^d$.  
\end{theorem}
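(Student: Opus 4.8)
The plan is to combine the second-order Campbell formula \eqref{eq:campbell} (for $k=2$) with second-order intensity reweighted stationarity and then integrate by parts. The key first step is a ``master identity'': for any function $F:\R^d\to\R$ with $Fg\in L^1(\R^d)$ and for which the left-hand side below is finite,
\begin{equation*}
\EE \sum_{u,v\in\bX\cap W}^{\neq} e(u,v)\,F(v-u) \;=\; \int_{\R^d} F(w)\,g(w)\,\dd w .
\end{equation*}
To obtain this I would apply \eqref{eq:campbell} (first to $F_+$ and $F_-$ separately, then recombine) with integrand $(u,v)\mapsto e(u,v)F(v-u)\mathbf{1}_{W}(u)\mathbf{1}_{W}(v)$, substitute $\rho^{(2)}(u,v)=\rho(u)\rho(v)g(v-u)$ so that the factor $\rho(u)\rho(v)$ cancels against the one in $e$, and change variables $(u,v)\mapsto(u,w)$ with $w=v-u$ (unit Jacobian). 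For fixed $w$, the inner integral in $u$ runs over $W\cap W_{-w}$ and contributes $|W\cap W_{-w}|=|W\cap W_{w}|$, which cancels the factor $|W\cap W_{v-u}|^{-1}$ inside $e$, leaving exactly $\int_{\R^d}F(w)g(w)\,\dd w$.

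The second step applies the master identity twice. Taking $F=\nabla\log g\cdot h$ gives $Fg=\nabla g\cdot h$, whose absolute integrability is precisely the hypothesis $\|h\|\|\nabla g\|\in L^1(\R^d)$, and yields the left-hand side of \eqref{eq:VE1} as $\int_{\R^d}\nabla g(w)\cdot h(w)\,\dd w$; taking $F=\mathrm{div}\,h$ gives $Fg=g\,\mathrm{div}\,h$ and yields the right-hand side of \eqref{eq:VE1}, up to the minus sign, as $\int_{\R^d}g(w)\,\mathrm{div}\,h(w)\,\dd w$. It then remains to prove the purely analytic identity $\int_{\R^d}\nabla g\cdot h\,\dd w=-\int_{\R^d}g\,\mathrm{div}\,h\,\dd w$. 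This is integration by parts: from the product rule $\mathrm{div}(gh)=\nabla g\cdot h+g\,\mathrm{div}\,h$ and the divergence theorem on $B_n$,
\begin{equation*}
\int_{B_n}\nabla g\cdot h\,\dd w+\int_{B_n}g\,\mathrm{div}\,h\,\dd w=\int_{\partial B_n} g(w)\,h(w)\cdot\nu(\dd w),
\end{equation*}
and letting $n\to\infty$ the right-hand side vanishes by \eqref{eq:condition_h} while the two integrals on the left converge to the corresponding integrals over $\R^d$ since $B_n\uparrow\R^d$ and the integrands are dominated by $L^1$ functions.

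I expect the main obstacle to be the integrability bookkeeping rather than any single deep idea. One must check that $e(u,v)$ is finite for the pairs that actually occur, i.e.\ that $|W\cap W_{v-u}|>0$: this holds as soon as $u$ and $v$ lie in the interior of $W$, and the remaining pairs contribute nothing because $\partial W$ is Lebesgue-negligible and points of $\bX$ fall on it with probability zero. One must also justify Tonelli/Fubini and the change of variables (immediate in the positive case, then transferred to the signed case once finiteness is known) and verify that the two expectations in \eqref{eq:VE1} are finite so that \eqref{eq:campbell} may legitimately be applied to signed integrands. Finally, the dominated convergence in the integration-by-parts step needs $g\,\mathrm{div}\,h\in L^1(\R^d)$; this is not assumed verbatim but follows from the stated hypotheses, since the two displays above already force $\int_{B_n}g\,\mathrm{div}\,h$ to converge as $n\to\infty$. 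Once these points are settled, the three displays combine at once to give \eqref{eq:VE1}.
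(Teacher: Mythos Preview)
Your proposal is correct and follows essentially the same route as the paper: apply the second-order Campbell formula together with the change of variables $w=v-u$ to reduce both sides of \eqref{eq:VE1} to integrals over $\R^d$, and then link $\int_{\R^d}\nabla g\cdot h$ to $-\int_{\R^d} g\,\mathrm{div}\,h$ via the divergence theorem on $B_n$ combined with \eqref{eq:condition_h}. The only difference is organizational---you isolate a reusable ``master identity'' and apply it twice, whereas the paper carries out the computation linearly---and you are somewhat more explicit about the integrability bookkeeping than the paper itself.
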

We note that condition~\eqref{eq:condition_h} is in particular satisfied if the function $h$ is compactly supported.

We next consider the case where the pair correlation function is isotropic, i.e. for any $u,v \in \R^d$ there exists $g_0:\R^+ \to \R^+$ such that $g(u,v)=g(v-u) = g_0(\|v-u\|)$.

\begin{theorem} \label{thm:VEisotropic}
Assume $\bX$ is second-order intensity reweighted stationary with
isotropic pair correlation function $g_0$.
Let $h:\R^+ \to \R$ be  continuously differentiable on $\R^+$.  Assume that $g_0$ is continuously differentiable on $\R^+$ and that either
\begin{align} 
  &t \mapsto h(t)g_0^\prime(t) \in L^1(\R^+) \quad \text{ and }
   \quad \lim_{n \to \infty } \{g_0(n)h(n)-g_0(0)h(0)\}=0 \label{eq:condition_h_iso1}
\end{align}
or
\begin{equation} \label{eq:condition_h_iso2}
   t \mapsto t^{d-1}h(t)g_0^\prime(t) \in L^1(\R^+) \; \text{ and } \; \lim_{n \to \infty } \{n^{d-1}g_0(n)h(n)-g_0(0)h(0)\mathbf 1(d=1)\}=0.
\end{equation}
Then we have the two following cases. If~\eqref{eq:condition_h_iso1} is assumed,
\begin{align}
    \EE \bigg\{ \sum_{u,v \in \bX\cap W }^{\neq} \frac{e(u,v)}{\|v-u\|^{d-1}} h(\|v-u\|) &(\log g_0)^\prime(\|v-u\|)\bigg\} = \nonumber\\
&    - \EE \left\{ \sum_{u,v \in \bX \cap W }^{\neq} \frac{e(u,v)}{\|v-u\|^{d-1}} h^\prime(\|v-u\|)
    \right\},\label{eq:VE_iso1}
\end{align}
where $e(u,v) = \{\rho(u) \rho(v) |W \cap W_{v-u}| \}^{-1}$ for any $u,v \in \R^d$. 
Instead, if \eqref{eq:condition_h_iso2} is assumed,
\begin{align}
    \EE \bigg\{ &\sum_{u,v \in \bX\cap W }^{\neq} e(u,v) h(\|v-u\|) (\log g_0)^\prime(\|v-u\|)\bigg\} = \nonumber\\
&    - \EE \left[ \sum_{u,v \in \bX \cap W }^{\neq} e(u,v) \left\{ (d-1)\frac{h(\|v-u\|)}{\|v-u\|}+h^\prime(\|v-u\|) \right\}
    \right].\label{eq:VE_iso2}
\end{align}
\end{theorem}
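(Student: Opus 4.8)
The plan is to deduce both identities from Theorem~\ref{thm:VE} by feeding it a vector field $\mathbf h:\R^d\to\R^d$ obtained by spreading the scalar function $h$ radially. Concretely, for \eqref{eq:VE_iso2} I would take $\mathbf h(w) = h(\|w\|)\,w/\|w\|$, and for \eqref{eq:VE_iso1} the slightly different field $\mathbf h(w) = \|w\|^{-(d-1)}h(\|w\|)\,w/\|w\|$. Since $\bX$ is second-order intensity reweighted stationary with isotropic $g$, we have $g(w)=g_0(\|w\|)$ and hence $\nabla\log g(w) = (\log g_0)^\prime(\|w\|)\,w/\|w\|$; thus for any radial field $\mathbf h(w)=\psi(\|w\|)\,w/\|w\|$ one gets $\nabla\log g(w)\cdot\mathbf h(w) = \psi(\|w\|)(\log g_0)^\prime(\|w\|)$, while the elementary formula for the divergence of a radial field gives $\mathrm{div}\,\mathbf h(w) = \psi^\prime(\|w\|) + (d-1)\psi(\|w\|)/\|w\|$. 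Taking $\psi(r)=h(r)$ reproduces the two integrands of \eqref{eq:VE_iso2}, while taking $\psi(r)=r^{-(d-1)}h(r)$ gives $\psi^\prime(r)+(d-1)\psi(r)/r = r^{-(d-1)}h^\prime(r)$ and reproduces those of \eqref{eq:VE_iso1}. These are routine chain-rule computations.

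Next I would verify that the chosen $\mathbf h$ meets the hypotheses of Theorem~\ref{thm:VE}. Writing the integral in polar coordinates, $\int_{\R^d}\|\mathbf h(w)\|\,\|\nabla g(w)\|\,\dd w$ equals a dimensional constant times $\int_0^\infty |h(r)|\,|g_0^\prime(r)|\,\dd r$ in the first case and times $\int_0^\infty r^{d-1}|h(r)|\,|g_0^\prime(r)|\,\dd r$ in the second, so the integrability requirement of Theorem~\ref{thm:VE} is exactly the first half of \eqref{eq:condition_h_iso1}, respectively \eqref{eq:condition_h_iso2}. For the boundary condition \eqref{eq:condition_h} I would use the \emph{annular} domains $B_n=\{w\in\R^d:1/n\le\|w\|\le n\}$, whose boundary is the union of the spheres of radius $n$ and $1/n$ with outer normal $\pm w/\|w\|$; on the sphere of radius $R$ the quantity $g(w)\mathbf h(w)\cdot\nu$ integrates against surface measure to $\pm(\text{const})R^{d-1}\psi(R)g_0(R)$, i.e. to $\pm(\text{const})g_0(R)h(R)$ for the first field and $\pm(\text{const})R^{d-1}g_0(R)h(R)$ for the second. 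Adding the outer contribution ($R=n$, sign $+$) and the inner one ($R=1/n$, sign $-$) and letting $n\to\infty$ shows that the boundary term vanishes precisely under the second half of \eqref{eq:condition_h_iso1}, respectively \eqref{eq:condition_h_iso2}; the indicator $\mathbf 1(d=1)$ and the residual term $g_0(0)h(0)$ are exactly the bookkeeping of the inner-sphere limit, since $R^{d-1}\to 1$ as $R\to 0$ only when $d=1$. When $d=1$ the two fields, and the two identities, coincide, which is a consistency check worth noting.

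The one genuine subtlety — and the step I expect to cost the most care — is that for $d\ge 2$ (and also for $d=1$ unless $h(0)=0$) the radial field $\mathbf h$ fails to be continuously differentiable at the origin, so Theorem~\ref{thm:VE} cannot be invoked verbatim on $\R^d$. This is harmless because every pair with $u\neq v$ contributes $v-u\neq 0$, so nothing is lost by working on $\R^d\setminus\{0\}$: I would re-run the proof of Theorem~\ref{thm:VE} — Campbell's formula \eqref{eq:campbell}, the substitution $w=v-u$ which cancels the factor $|W\cap W_{v-u}|$ against the $u$-integral, and the divergence theorem — on the annuli $B_n$, on which $\mathbf h$ is $C^1$ and the divergence theorem is valid, and then let $n\to\infty$ using the $L^1$ bound (to pass the bulk term to the limit by dominated convergence) together with the boundary estimate above. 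Equivalently one may smoothly regularize $\mathbf h$ inside a small ball, apply Theorem~\ref{thm:VE}, and remove the regularization by dominated convergence; either way, conditions \eqref{eq:condition_h_iso1}--\eqref{eq:condition_h_iso2} are exactly what makes these limiting arguments go through.
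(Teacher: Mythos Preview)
Your argument is correct, but it takes a different and somewhat more circuitous route than the paper. The paper does \emph{not} try to feed a radial vector field into Theorem~\ref{thm:VE}; instead it reruns the strategy of that proof directly in the isotropic setting. Concretely, it applies the Campbell formula and the substitution $w=v-u$ to obtain (for \eqref{eq:VE_iso2}) the scalar integral $\sad\int_0^\infty t^{d-1}g_0^\prime(t)h(t)\,\dd t$, then performs one-dimensional integration by parts on $[0,n]$ and lets $n\to\infty$; the boundary term at $t=0$ is $0^{d-1}g_0(0)h(0)$, which vanishes unless $d=1$, explaining the indicator in \eqref{eq:condition_h_iso2}. One then converts back via polar coordinates and Campbell. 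This is shorter precisely because it never has to confront the issue you correctly flag --- that the radial field $\psi(\|w\|)\,w/\|w\|$ is not $C^1$ at the origin --- and so never needs the annular domains or a regularization step. Your approach has the conceptual advantage of exhibiting Theorem~\ref{thm:VEisotropic} as (morally) a specialization of Theorem~\ref{thm:VE}, and your treatment of the inner boundary at radius $1/n$ gives a nice alternative derivation of the $\mathbf 1(d=1)$ term; but since you end up re-running the proof of Theorem~\ref{thm:VE} on annuli anyway, the net effect is the paper's one-dimensional integration by parts dressed in $d$-dimensional language.
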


We stress  that the derivatives involved in Theorem~\ref{thm:VEisotropic} are derivatives with respect to $t\geq0$. Like for Theorem~\ref{thm:VE}, conditions~\eqref{eq:condition_h_iso1} and~\eqref{eq:condition_h_iso2} are in particular satisfied if $h$ is compactly supported in $(0,\infty)$.

\subsection{Sensitivity matrix}\label{sec:sensitivity}

In the next section we use empirical versions of \eqref{eq:VE_iso1} and \eqref{eq:VE_iso2} to construct
estimating functions for a parametric model of an isotropic pair correlation function $g_0$ depending on a $K$-dimensional parameter $\bbt$, $K \ge 1$. We here investigate the expression for the associated sensitivity matrices.

Consider functions $h_1,\ldots,h_K$ all fulfilling
\eqref{eq:condition_h_iso1} and possibly depending on $\bbt$. By
stacking the $K$ equations obtained by applying these functions for
$h_1,\dots,h_K$ in \eqref{eq:VE_iso1} we obtain the estimating function
\begin{equation}
      \sum_{u,v \in \bX\cap W }^{\neq} \frac{e(u,v)}{\|v-u\|^{d-1}} \mathbf{h}(\|v-u\|) (\log g_0)^\prime(\|v-u\|) + \sum_{u,v \in \bX \cap W }^{\neq} \frac{e(u,v)}{\|v-u\|^{d-1}} \mathbf{h}^\prime(\|v-u\|) \label{eq:estmfct}
    \end{equation}
where $\mathbf{h}$ and $\mathbf{h}'$ are vector functions with
components $h_i$ and $h_i'$. The sensitivity matrix is obtained as the
expectation of the negated derivative (with respect to $\bbt$) of
\eqref{eq:estmfct}. After applying \eqref{eq:VE_iso1} once again
after differentiation we obtain the sensitivity matrix
\[ S(\bbt)=- \EE \sum_{u,v \in \bX\cap W }^{\neq} \frac{e(u,v)}{\|v-u\|^{d-1}} \mathbf{h}(\|v-u\|) \frac{\dd}{\dd \bbt^\top}  (\log g_0)^\prime(\|v-u\|). \]
Applying the Campbell theorem and converting to polar coordinates, we obtain
\[ S(\bbt) = - \sad \int_0^\infty \mathbf{h}(t)\left
    [\frac{\dd}{\dd \bbt^\top}  (\log g_0)^\prime(t) \right] g_0(t)
  \dd t, \]
where $\sad$ is the surface area of the $d$-dimensional unit ball.
In case of \eqref{eq:VE_iso2} we obtain a similar expression,
\[ S(\bbt)=-  \sad \int_0^\infty \mathbf{h}(t)\left [\frac{\dd}{\dd \bbt^\top}  (\log g_0)^\prime(t) \right] g_0(t) t^{d-1} \dd t.  \]
By choosing $\mathbf{h}(t)=-\psi(t) \frac{\dd}{\dd \bbt}  (\log g_0)^\prime(t)$ for some real function $\psi$, $S(\bbt)$ becomes at least positive semi-definite.

\section{Estimation of log linear pair correlation function} \label{sec:loglin}

We now consider the estimation of an isotropic pair correlation function of the form
\begin{equation}\label{eq:g_isotropic}
\log g_0(t) = \bbt^\top \br(t) = \bbt^\top \left\{ r_1(t),\dots,r_K(t)\right\}^\top  
\end{equation}
where the functions $r_k:\R^+\to \R$, $k=1,\dots,K$ are known. Following Section~\ref{sec:sensitivity}, the idea is to apply Theorem~\ref{thm:VEisotropic} $K$ times to  functions $h_i$, $i=1,\ldots,K$, of the form $h_i(t) = {-\psi(t) }\frac{\partial}{\partial \bbt_i} (\log g_o)'(t) = -\psi(t) r^\prime_i(t)$ where the function $\psi:\R^+ \to \R$ will be justified and specified  later. It is then remarkable that we obtain a simple estimating equation of the form $\mathbf{A} \bbt + \mathbf b =0$.
The sensitivity matrix discussed in Section~\ref{sec:sensitivity} is $S(\bbt)=-\EE \mathbf A$. Provided $\mathbf A$ is invertible we obtain the explicit solution
\begin{equation} \label{eq:betaEst}
  \hat \bbt = - \mathbf A^{-1} \mathbf b.
\end{equation}
The matrix $\mathbf A$ and the vector $\mathbf b$ are specified in the following corollary.
\begin{corollary}\label{cor:VEisotropic}
Let $\psi:\R^+\to \R$. Assume that $\psi$ and $r_k$ ($k=1,\dots,K$) are respectively  continuously differentiable and twice  continuously differentiable on $\R^+$. 
Assume either that
\begin{equation}
  \label{eq:rellphi1}
  t \mapsto \|\br^\prime(t)\|^2 \psi(t) \in L^1(\R^d) \text{ and }
  \lim_{n\to \infty} \psi(n) \br(n)^\top \br^\prime(n) - \psi(0) \br(0)^\top\br^\prime(0)=0
\end{equation}
or
\begin{align}
  & t \mapsto t^{d-1} \|\br^\prime(t)\|^2 \psi(t) \in L^1(\R^d) \nonumber\\
&\text{ and }
  \lim_{n\to \infty} n^{d-1}\psi(n) \br(n)^\top \br^\prime(n) - \psi(0) \br(0)^\top\br^\prime(0)\mathbf 1(d=1)=0.
  \label{eq:rellphi2}
\end{align}
If~\eqref{eq:rellphi1} is assumed, we define the $(K,K)$ matrix $\mathbf A$ and the vector $\mathbf b\in \R^K$ by
\begin{align} \
  \mathbf A&= \sum_{u,v \in \bX \cap W}^{\neq} \frac{e(u,v)}{\|v-u\|^{d-1}}\psi(\|v-u\|) \br^\prime(\|v-u\|)\{\br^\prime(\|v-u\|)\}^\top  \label{eq:defA1}\\
\bb&=\sum_{u,v \in \bX \cap W }^{\neq} \frac{e(u,v)}{\|v-u\|^{d-1}}  \left\{ \psi^\prime(\|v-u\|) \br^\prime(\|v-u\|) + \psi(\|v-u\|) \br^{\prime\prime}(\|v-u\|)\right\}  \label{eq:defb1}
\end{align}
where again the edge effect factor is  $e(u,v) = \{\rho(u) \rho(v) |W \cap W_{v-u}| \}^{-1}$ for any $u,v\in \R^d$.
Instead, in case of \eqref{eq:rellphi2}, we define
\begin{align} \
  \mathbf A&= \sum_{u,v \in \bX \cap W}^{\neq} e(u,v)\psi(\|v-u\|) \br^\prime(\|v-u\|)\{\br^\prime(\|v-u\|)\}^\top  \label{eq:defA2}\\
\bb&=\sum_{u,v \in \bX \cap W }^{\neq} e(u,v)  \bigg\{ 
(d-1) \frac{\psi(\|v-u\|) \br^\prime(\|v-u\|)}{\|v-u\|} \nonumber \\
& \qquad \qquad
+ \psi^\prime(\|v-u\|) \br^\prime(\|v-u\|) + \psi(\|v-u\|) \br^{\prime\prime}(\|v-u\|)\bigg\}  \label{eq:defb2}
\end{align}
Then, the equation 
\begin{equation}
  \label{eq:VEexp} \mathbf A \bbt + \bb =0
\end{equation} 
is an unbiased estimating equation. 
\end{corollary}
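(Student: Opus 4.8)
The plan is to obtain Corollary~\ref{cor:VEisotropic} as a direct specialization of Theorem~\ref{thm:VEisotropic}. For $i=1,\dots,K$ set $h_i(t) = -\psi(t)\,r_i'(t)$, apply the isotropic variational equation with $h=h_i$ --- using \eqref{eq:VE_iso1} when \eqref{eq:rellphi1} is assumed and \eqref{eq:VE_iso2} when \eqref{eq:rellphi2} is assumed --- and then stack the resulting $K$ scalar identities into one vector identity. The argument thus splits into two parallel cases, differing only through the weight $\|v-u\|^{-(d-1)}$ versus $1$ and through the extra term present on the right-hand side of \eqref{eq:VE_iso2}.

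The first step is to verify that each $h_i$ is admissible in Theorem~\ref{thm:VEisotropic}. Continuous differentiability of $h_i$ on $\R^+$ is immediate from $\psi\in C^1(\R^+)$ and $r_i\in C^2(\R^+)$ and yields $h_i'(t) = -\psi'(t) r_i'(t) - \psi(t) r_i''(t)$. For the $L^1$ and boundary requirements, differentiate \eqref{eq:g_isotropic} to get $g_0' = g_0\,(\log g_0)' = g_0\,\bbt^\top\br'$; Cauchy--Schwarz then gives $|h_i(t)g_0'(t)| \le \|\bbt\|\,g_0(t)\,|\psi(t)|\,\|\br'(t)\|^2$, so the integrability part of \eqref{eq:condition_h_iso1} (resp.\ \eqref{eq:condition_h_iso2}) follows from that of \eqref{eq:rellphi1} (resp.\ \eqref{eq:rellphi2}), while the boundary part of \eqref{eq:condition_h_iso1} (resp.\ \eqref{eq:condition_h_iso2}), written out for $h_i$, reduces to the limit condition assumed in \eqref{eq:rellphi1} (resp.\ \eqref{eq:rellphi2}). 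I expect this admissibility check to be the only delicate point: one must ensure the factor $g_0=\exp(\bbt^\top\br)$ and the identity $g_0'=g_0\,\bbt^\top\br'$ are controlled so that the Cauchy--Schwarz bound genuinely transfers integrability from $\psi\|\br'\|^2$ (resp.\ $t^{d-1}\psi\|\br'\|^2$) to $h_ig_0'$, and likewise for the boundary terms.

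Given admissibility, the rest is bookkeeping. Substituting $h=h_i$ and $(\log g_0)'(t)=\bbt^\top\br'(t)$ into the left-hand side of \eqref{eq:VE_iso1}, its $i$-th component equals $-\EE\sum_{u,v\in\bX\cap W}^{\neq} \frac{e(u,v)}{\|v-u\|^{d-1}}\,\psi(\|v-u\|)\,r_i'(\|v-u\|)\,\br'(\|v-u\|)^\top\bbt$; recognizing $r_i'(\|v-u\|)\,\br'(\|v-u\|)^\top$ as the $i$-th row of the rank-one matrix $\br'(\|v-u\|)\br'(\|v-u\|)^\top$ and stacking over $i$ yields exactly $-\EE[\mathbf A]\,\bbt$ with $\mathbf A$ as in \eqref{eq:defA1}. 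On the right-hand side, inserting $h_i'(t)=-\psi'(t)r_i'(t)-\psi(t)r_i''(t)$ and stacking yields $\EE[\bb]$ with $\bb$ as in \eqref{eq:defb1}. Hence \eqref{eq:VE_iso1} reads $-\EE[\mathbf A]\bbt=\EE[\bb]$, i.e.\ $\EE[\mathbf A\bbt+\bb]=0$, which is precisely the unbiasedness of \eqref{eq:VEexp}. Starting instead from \eqref{eq:VE_iso2} the computation is identical, except that the extra summand $(d-1)h_i(\|v-u\|)/\|v-u\|$ on its right-hand side contributes, after stacking, the first term of \eqref{eq:defb2}, so that one obtains $\mathbf A$ and $\bb$ as in \eqref{eq:defA2}--\eqref{eq:defb2}. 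This completes the proof in both cases.
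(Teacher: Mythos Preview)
Your approach is essentially identical to the paper's: apply Theorem~\ref{thm:VEisotropic} with $h(t)=-\psi(t)r_k'(t)$ for each $k$, use $(\log g_0)'(t)=\bbt^\top\br'(t)=\br'(t)^\top\bbt$, and stack the resulting $K$ scalar identities into $\EE[\mathbf A\bbt+\bb]=0$. The paper's own proof is a one-line sketch to this effect; you have simply written out the bookkeeping and the admissibility check in more detail.

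One remark on your flagged ``delicate point'': you correctly observe that the integrability and boundary conditions of Theorem~\ref{thm:VEisotropic} involve $g_0$ and $g_0'$, whereas the hypotheses \eqref{eq:rellphi1}--\eqref{eq:rellphi2} of the corollary are stated purely in terms of $\psi$ and $\br$, so a literal transfer via your Cauchy--Schwarz bound leaves an uncontrolled factor $g_0(t)=\exp\{\bbt^\top\br(t)\}$, and the boundary terms $g_0(n)\psi(n)r_i'(n)$ do not coincide with the assumed limits $\psi(n)\br(n)^\top\br'(n)$. The paper does not address this either; its proof simply says ``apply Theorem~\ref{thm:VEisotropic}''. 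In practice the gap disappears when $\psi$ is compactly supported (the case emphasized immediately after the corollary and used throughout the applications), so your caveat is well placed but does not constitute a defect relative to the paper's own argument.
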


\begin{proof} The proof consists in applying Theorem~\ref{thm:VEisotropic} with $h(t)=-\psi(t) r_k'(t)$ for $k=1,\dots,K$ and in noticing that $(\log g_0)^\prime(t) = \bbt^\top \br^\prime(t) = \br^\prime(t)^\top \bbt$.
  \end{proof}

We note that if $\psi$ is compactly supported in $[0,\infty)$,
then~\eqref{eq:rellphi1} or~\eqref{eq:rellphi2} are always valid
assumptions. Another special case is also interesting: let $d>1$ and
$\psi=1$, then~\eqref{eq:rellphi2} is true if for any $k,l=1,\dots,K$,
$t\mapsto t^{d-1}r^\prime_k(t)^2 \in L^1(\R^d)$ and $\lim_{n\to
  \infty} n^{d-1}r_k(n)r_{l}^\prime(n)=0$. This simple
  condition is for instance satisfied if the $r_k$'s' are exponential covariance functions. 

The results above are e.g.\ applicable to the case of a pair correlation function for a log Gaussian Cox process with covariance function given by a sum of known correlation functions scaled by unknown variance parameters. Assuming a known correlation function is on the other hand quite restrictive. However, any log pair correlation function can be approximated well on a finite interval using a suitable basis function expansion so that we can effectively represent it as a log linear model. We exploit this in Section~\ref{sec:osenonpar} where we consider the case where the functions $r_k$ are basis functions on a bounded real interval.

\begin{remark}
In the more general case of a translation invariant pair correlation function of log linear form
\begin{equation} \label{eq:g}
  \log g(w) = \sum_{k=1}^K \bbt_k r_k(w) = \bbt^\top \br(w)
\end{equation}
where $r_k(w)$, $k=1,\dots,K$, are $K$ known functions assumed to be continuously differentiable on $\R^d$, we can also obtain an estimating equation of the form \eqref{eq:VEexp} using Theorem~\ref{thm:VE} instead of Theorem~\ref{thm:VEisotropic}. We omit the details.
\end{remark}
\begin{remark}\label{rm:psi}
In applications of \eqref{eq:defA1}-\eqref{eq:defb1} for $d=2$ or  \eqref{eq:defA2}-\eqref{eq:defb2} for $d \ge 1$ the division by
  $\|v-u\|^{d-1}$ or $\|v-u\|$ may lead to numerical instability for
  pairs of close points $u$ and $v$. This can be mitigated by
  a proper choice of the function $\psi$. In the spatial case of $d=2$ we propose to define
  $\psi(t)=(t/b)^2(1-(t/b))^2 \mathbf 1(t \in[0,b])$ for some $b>0$. With this
  choice of $\psi$ the divisors $\|v-u\|^{d-1}=\|v-u\|$ cancel out preventing very large or infinite variances of
 \eqref{eq:defA1}-\eqref{eq:defb2}. 
\end{remark}

\begin{remark}
The quantities \eqref{eq:defA1}-\eqref{eq:defb2} depend on the unknown intensity function. If the intensity function is constant equal to $\rho>0$ we can multiply \eqref{eq:VEexp} by $\rho^2$ whereby the resulting estimating equation no longer depends on $\rho$. Thus $g_0$ can be estimated without estimating $\rho$.  Otherwise, the intensity function has to be estimated first, for instance in a parametric way, see~\citet{guan:jalilian:waagepetersen:15}, and plugged into~\eqref{eq:defA1}-\eqref{eq:defb2}.
\end{remark}


\section{Variational orthogonal series estimation of the pair correlation function}\label{sec:osenonpar}

In this section we consider the estimation of an isotropic pair correlation function $g_0$ on a bounded interval $[r_{\min},r_{\min}+R]$, $0\le r_{\min}<\infty$ and $0<R<\infty$, using a series expansion of $\log g_0$. Let $\{\phi_k\}_{k\geq 1}$ denote an orthonormal basis of functions on $[0,R]$ with respect to some weight function $w(\cdot)\ge 0$, i.e. $\int_{0}^R  \phi_{k}(t)\phi_{l}(t) w(t) \dd t= \delta_{k l}$. Provided $\log g_0$ is square integrable (with respect to $w(\cdot)$) on $[r_{\min},r_{\min}+R]$, we have the expansion
\begin{equation}
  \label{eq:logg_orthogonal}
  \log g_0(t) = \sum_{k=1}^\infty \beta_k \phi_k(t-r_{\min})
\end{equation}
where the coefficients $\beta_k$ are defined by $\beta_k = \int_0^R g_0(t+r_{\min})\phi_k(t) w(t)\dd t$. 

We propose to approximate $\log g_0$ by truncating the infinite sum up
to some $K \ge 1$ and obtain estimates $\hat \beta_1,\ldots,\hat \beta_K$ using \eqref{eq:VEexp}. The resulting estimate thus becomes
\[ \widehat{\log g_{0,K}}(t) = \sum_{k=1}^K \hat \beta_k \phi_k(t-r_{\min}) .\]
In the sequel this estimator is referred to as the variational (orthogonal series) estimator (VSE for short).
The approach is related to \cite{zhao:18} who also considers an estimating equation approach to estimate a pair correlation function of the form \eqref{eq:logg_orthogonal} but for a number $m>1$ of independent point processes on $\R$. The approach in \cite{zhao:18} further does not yield closed form expressions for the estimates of the coefficients. 

Orthogonal series estimators have already been considered by
\citet{jalilian:guan:waagepetersen:17} who expand $g_0-1$ instead of $\log g_0$. They propose very simple unbiased estimators of the coefficients but the resulting estimator of $g_0$, referred to as the OSE in the sequel, is not guaranteed to be non-negative. 

\subsection{Implementation of the VSE}

Examples of orthogonal bases include the cosine basis with $w(r)=1$, $\phi_1(r) = 1/\surd{R}$ and
$\phi_k(r)= (2/R)^{1/2} \cos\{ (k - 1) \pi r/R \}$, $k \ge 2$. Another
example is the Fourier-Bessel basis with $w(r)= r^{d-1}$ and
 \[ 
   \phi_k(r)=\frac{2^{1/2}}{ RJ_{\nu+1}(\alpha_{\nu,k})} J_{\nu}\left(r \alpha_{\nu,k}/R\right)r^{-\nu}, \quad k \ge 1,
 \]
where $\nu=(d-2)/2$, $J_{\nu}$ is the Bessel function of the first kind of 
order $\nu$ and $\{\alpha_{\nu,k}\}_{k=1}^\infty$ is the sequence of
successive positive roots of $J_{\nu}(r)$. In the context of the variational equation \eqref{eq:VEexp} we need that the basis functions $\phi_k$ have non-zero derivatives in order to estimate $\beta_k$. This is not the case for $\phi_1$ of the cosine basis. We therefore consider in the following the Fourier-Bessel basis.

Let $b_k=1[k \le K]$, $k \ge 1$. The mean integrated squared
error (MISE) for $\log g_0$ of the VSE over the interval $[\rmin,R+\rmin]$ is
\begin{align}
  \mise\big(\widehat{\log g_{0,K}}\big) 
  &= \sad \int_{\rmin}^{\rmin+R}  
  \EE\big\{ \widehat{\log {g_{0,K}}}(r) - \log {g_{0,K}}(r) \big\}^2 w(r-\rmin) \dd
    r \label{eq:miseo} \\
  &= \sad \sum_{k=1}^{\infty} \EE(b_k \hat{\beta}_{k}  - \beta_k)^2 
  =  \sad \sum_{k=1}^{\infty} \big[ b_{k}^2 \EE\{\hat{\beta}_{k}^2\} 
  -2b_k \beta_{k} \EE \hat \beta_{k}+ \beta_{k}^2 \big]. \nonumber
\end{align} 
\cite{jalilian:guan:waagepetersen:17} chose $K$ by minimizing an estimate of the MISE for $g_0$. We have, however, not been able to construct a useful estimate of \eqref{eq:miseo}. Instead we choose $K$ by maximizing a composite likelihood cross-validation criterion 
\begin{multline*} \mathrm{CV}(K)=\sum_{\substack{u, v \in
      \bX \cap W:\\r_{\min} \le \|u-v\|\le r_{\min}+\R}}^{\neq} \!\!\!\!\log[ \rho(u)\rho(v) \exp[\widehat{\log g_{0,K}}^{-\{u,v\}} ({\|v-u\|})]\\
  - \!\!\! \!\!\!\!\! \sum_{{\substack{u, v \in
      \bX \cap W:\\r_{\min} \le \|u-v\|\le r_{\min}+\R}}}^{\neq}
\!\!\!\!\!  \!\!\!\!\!\!\!\! \log
\int_{{W^2}} \!\!\!\! 1[ r_{\min} \le \|u-v\| \le r_{\min}+R] \rho(u)\rho(v) \exp[ \widehat{\log g_{0,K}}({\|v-u\|})] \dd u \dd v \end{multline*}
where $\widehat{\log g_{0,K}}^{-\{u,v\}}$ is the estimate of $\log g_0$ obtained using all pairs of points in $\bX$ except $(u,v)$ and $(v,u)$. This is a simplified version of the cross-validation criterion introduced by \cite{guan:composite:07} in the context of non-parametric kernel estimation of the pair correlation function. 

For computational simplicity and to guard against overfitting we choose inspired by \cite{jalilian:guan:waagepetersen:17} the first local maximum of $\mathrm{CV}(K)$ larger than or equal to two rather than looking for a global maximum. Note that when $\mathbf A$ and $\mathbf b$ in \eqref{eq:VEexp} have been obtained for one value of $K$, then we obtain the $\mathbf A$ and $\mathbf b$ for $K+1$ by just adding  one new row/column to the previous $\mathbf A$ and one new entry to the previous $\mathbf b$.

\subsection{Simulation study}\label{sec:simstudy}

We study the performance of our variational estimator using simulations of point processes with constant intensity 200 on $W=[0,1]^2$ or $W=[0,2]^2$. We consider the case of a Poisson process for which the pair correlation function is constant equal to one, a Thomas process (parent intensity $\kappa = 25$, dispersal standard deviation $\omega = 0.0198$ and offspring intensity $\mu = 8$), a variance Gamma cluster process (parent intensity $\kappa = 25$,  shape parameter $\nu = -1/4$,  dispersion parameter $\omega =  0.01845$ and offspring intensity $\mu = 8$), and a determinantal point process (DPP) with exponential kernel $K(r)=\exp(-r/\alpha)$ and $\alpha=0.039$. The pair correlation functions for the four point process models are shown in Figures~\ref{fig:estimates1} and~\ref{fig:estimates2} in the usual scale as well as in the log scale. The Thomas and variance Gamma processes are clustered with pair correlation functions bigger than one while the DPP is repulsive with pair correlation function less than one. 
In all cases we consider $R=0.125$ and we let $r_{\min}=0$ for Poisson, Thomas, and variance Gamma. For the DPP the log pair correlation function is not well-defined for $r=0$ and we therefore use $r_{\min}=0.01$ in case of the DPP. We use \eqref{eq:defA1} and \eqref{eq:defb1} for computing $\mathbf A$ and $\mathbf b$ and referring to Remark~\ref{rm:psi} we let $b=r_{\min}+R$. For each point process we generate 500 simulations.

\subsubsection{Estimates of coefficients}

Equations \eqref{eq:defA1} and \eqref{eq:defb1} are derived from
\eqref{eq:VE_iso1} in which $g_0$ is the true pair correlation
function. In practice, when considering a truncated version of
\eqref{eq:logg_orthogonal}, the estimating equation~\eqref{eq:VEexp}
is not unbiased which results in bias of the coefficient
estimates. This is exemplified in case of the Thomas process in the left plot of
Figure~\ref{fig:coefest} which shows boxplots of the first two
coefficient estimates when~\eqref{eq:logg_orthogonal} is truncated to
$K=2$. In the right plot, \eqref{eq:logg_orthogonal} is truncated to
$K=8$ which means that the truncated version
of~\eqref{eq:logg_orthogonal} is very close to the Thomas pair correlation function. Accordingly,
the bias of the estimates is much reduced. However, the estimation
variance increases when $K$ is increased. This emphasizes the
importance of selecting an appropriate trade-off between bias an
variance. The plots in Figure~\ref{fig:coefest} also show how
  the variance of the coefficient estimates decreases when the
  observation window $W$ is increased from $[0,1]^2$ to $[0,2]^2$.
\begin{figure}
  \centering
\begin{tabular}{cc}
\includegraphics[width=0.48\linewidth]{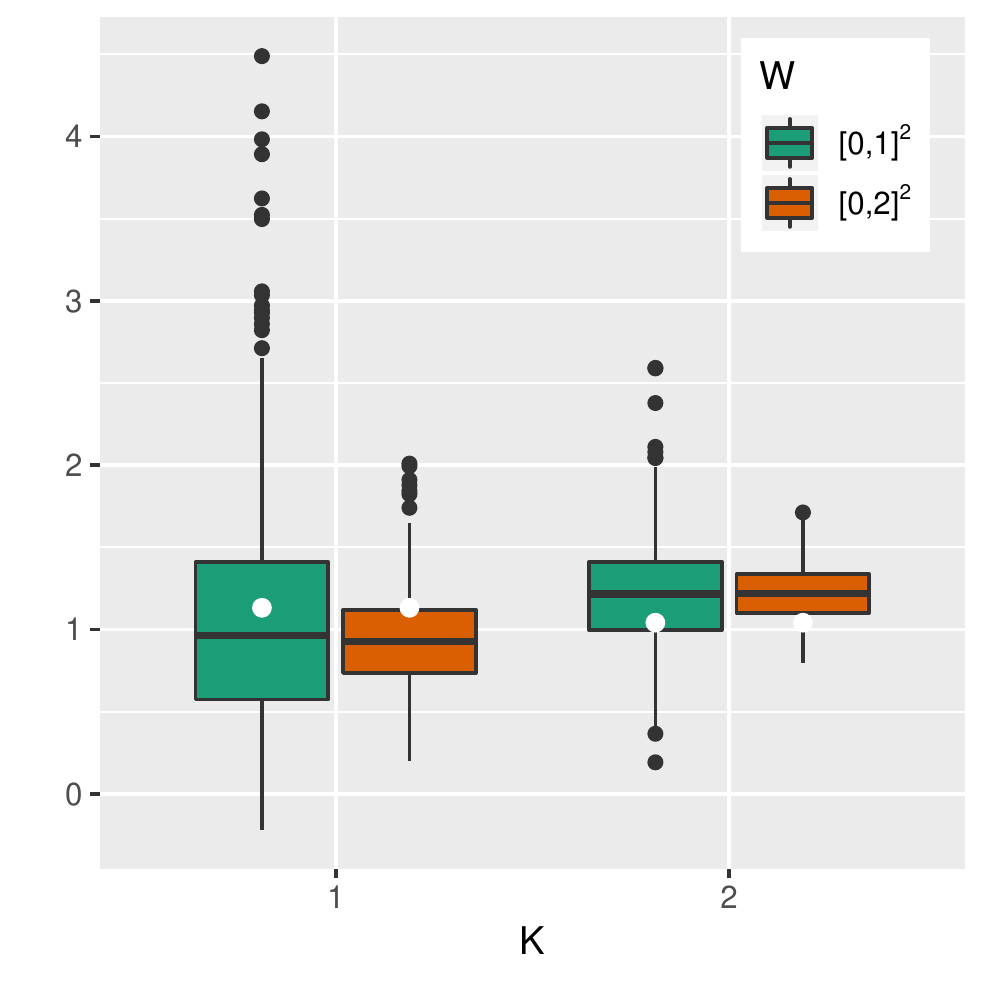} &  \includegraphics[width=0.48\linewidth]{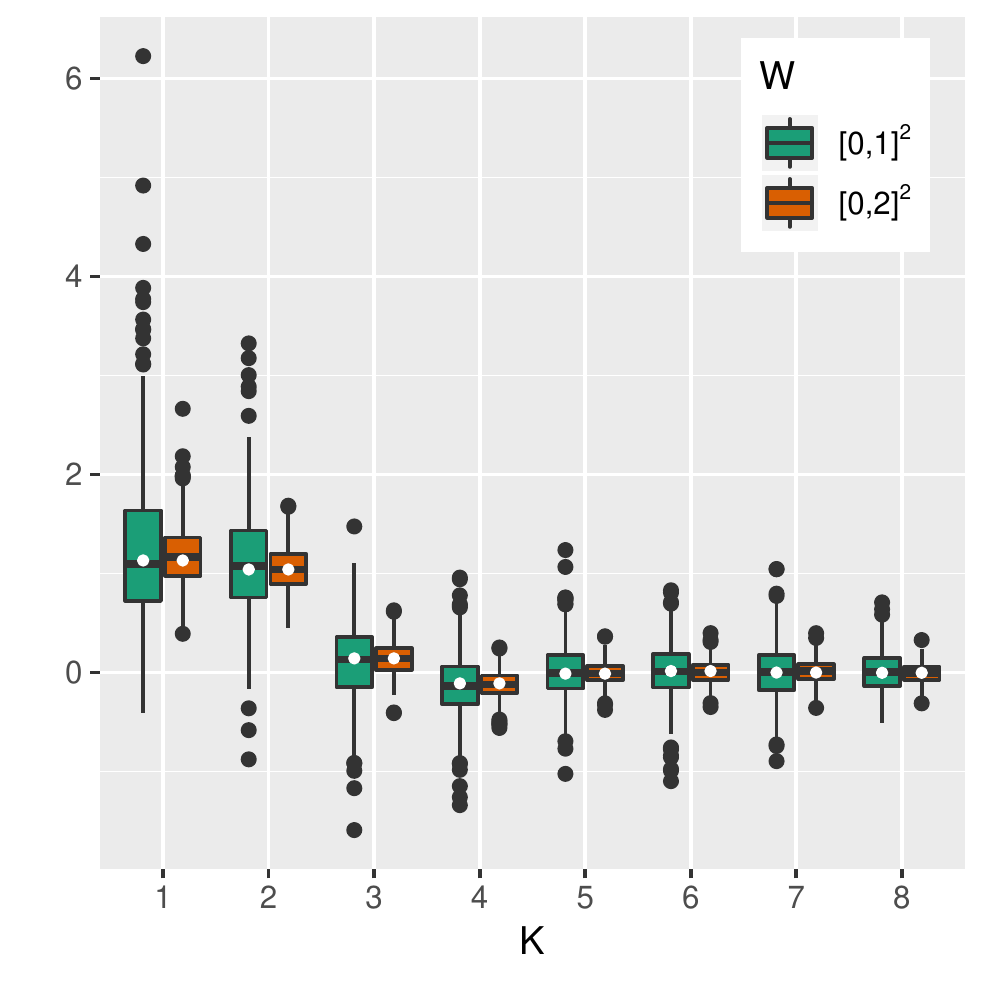}
\end{tabular}
  \caption{Estimates of the first $K$ coefficients when
    \eqref{eq:logg_orthogonal} is truncated to $K=2$ (left) or $K=8$
    (right) in case of the Thomas process. White points
    correspond to the true coefficient values. Observation
    window is either $W=[0,1]^2$ or $W=[0,2]^2$.}\label{fig:coefest}
\end{figure}

\subsubsection{Comparison of estimators}

In addition to our new VSE, we also for each simulation consider the OSE proposed by \cite{jalilian:guan:waagepetersen:17} (using the
Fourier-Bessel basis and their so-called simple smoothing scheme) and
a standard non-parametric kernel density estimate (KDE) with bandwidth
chosen by cross-validation
\citep{guan:leastsq:07,jalilian:waagepetersen:18}. Figures~\ref{fig:estimates1}
and \ref{fig:estimates2} depict means of the simulated OSE and
  VSE estimates of $g_0$ and $\log g_0$ as well as 95\% pointwise envelopes.
The variational estimator has larger variability whereas the bias can be smaller or larger than the OSE depending on the model.

\begin{figure}[t!] 
\centering
\begin{tabular}{cc}
\includegraphics[scale=.55]{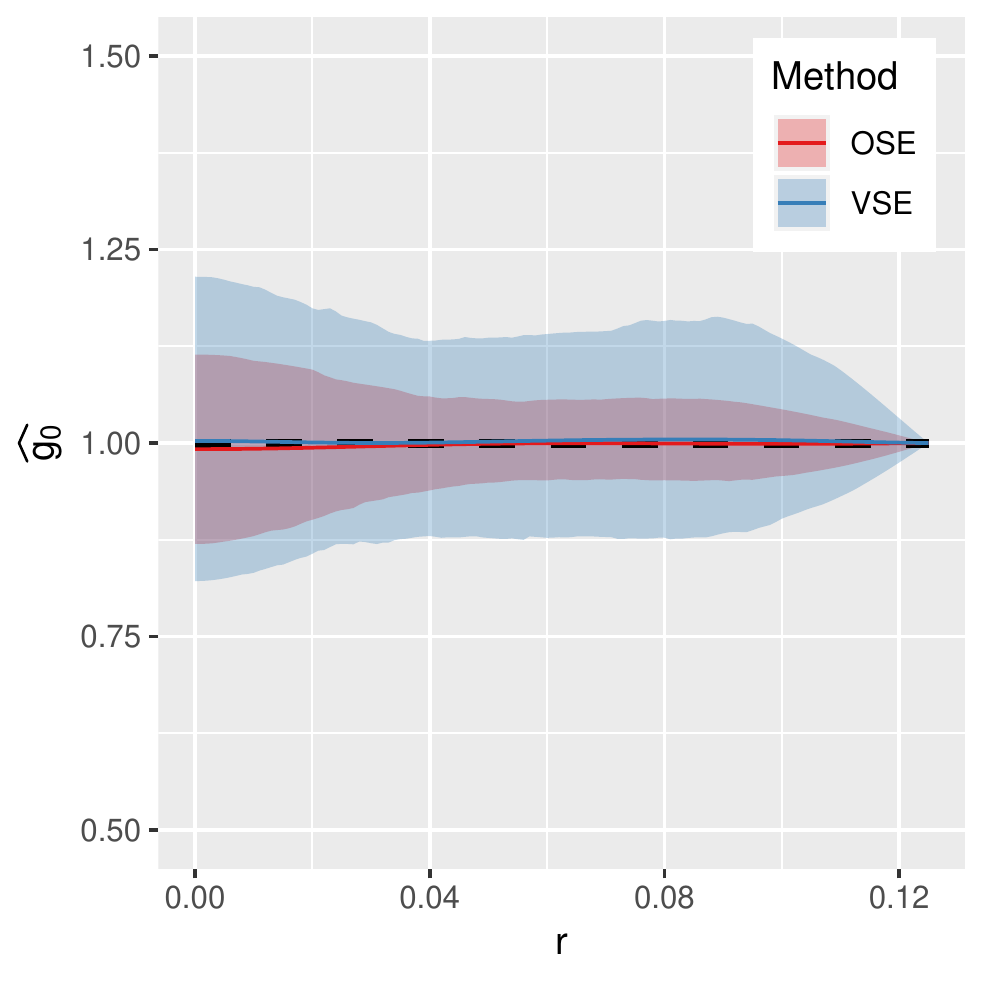}&
\includegraphics[scale=.55]{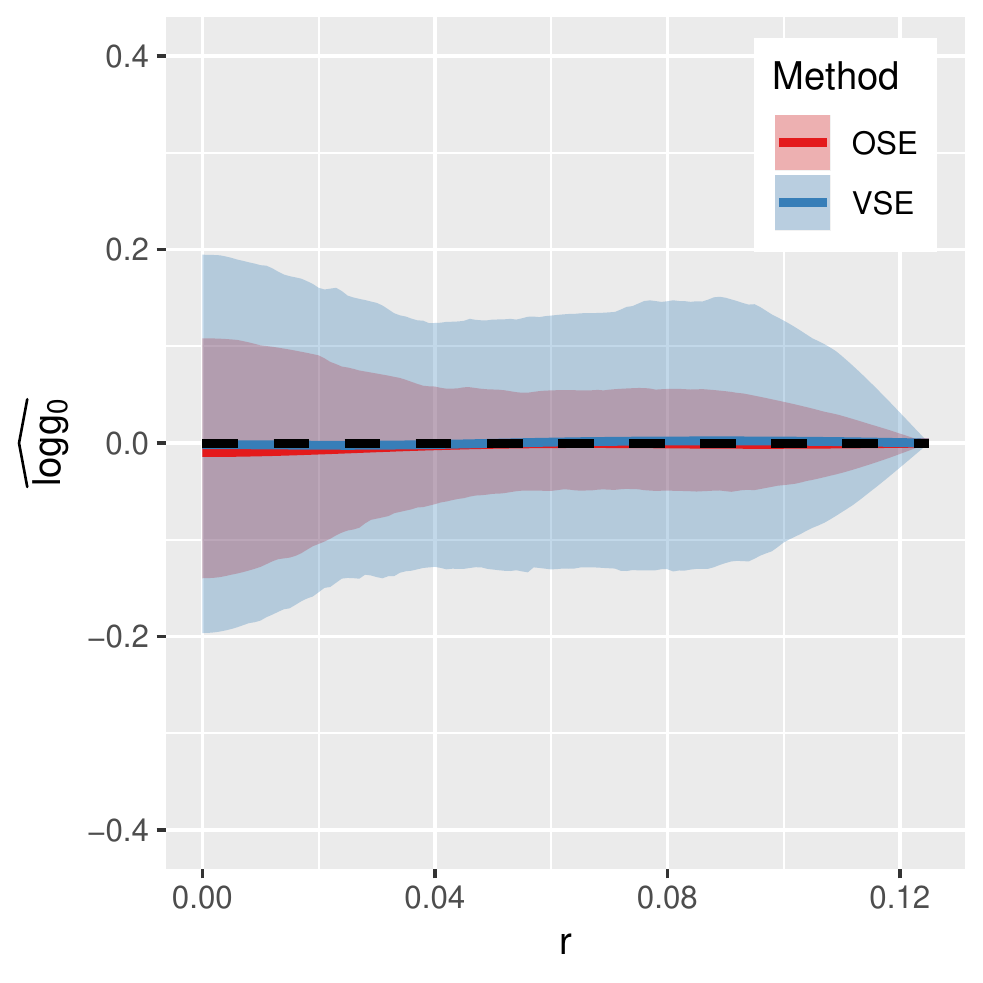}\\
\includegraphics[scale=.55]{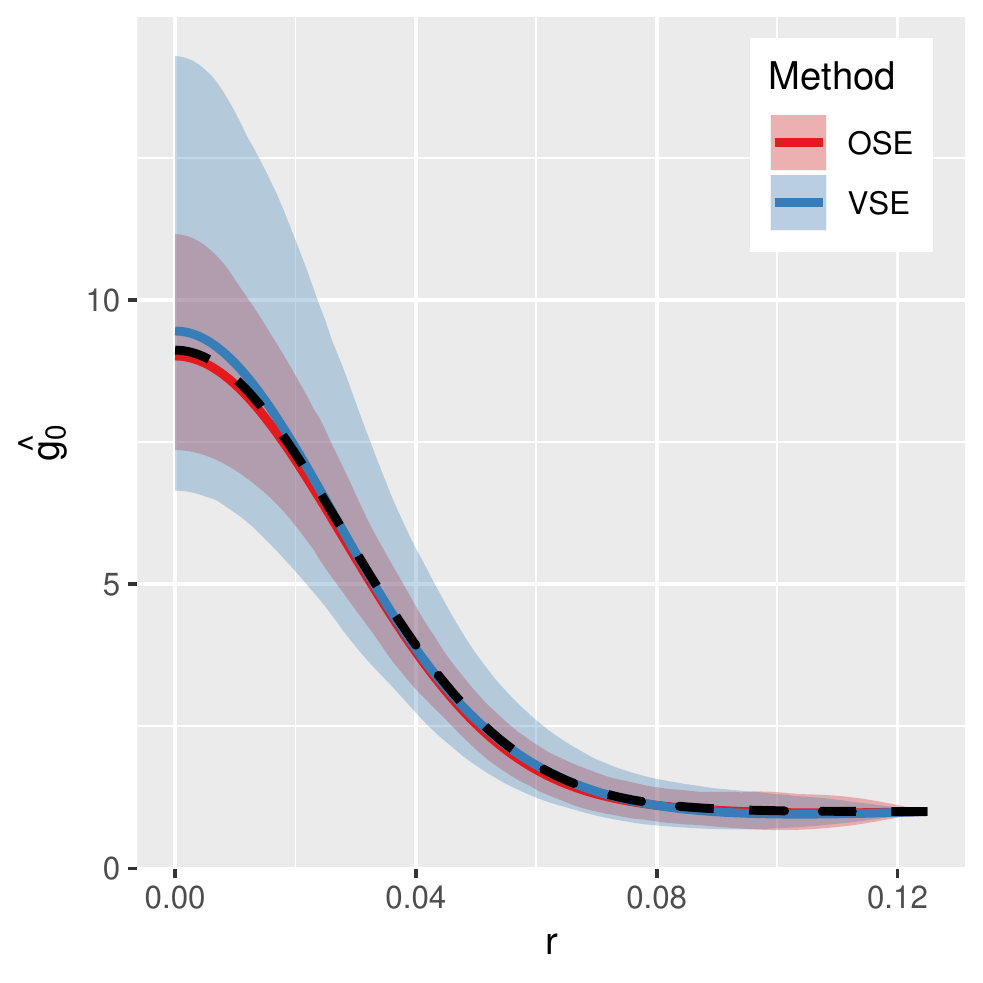}&
\includegraphics[scale=.55]{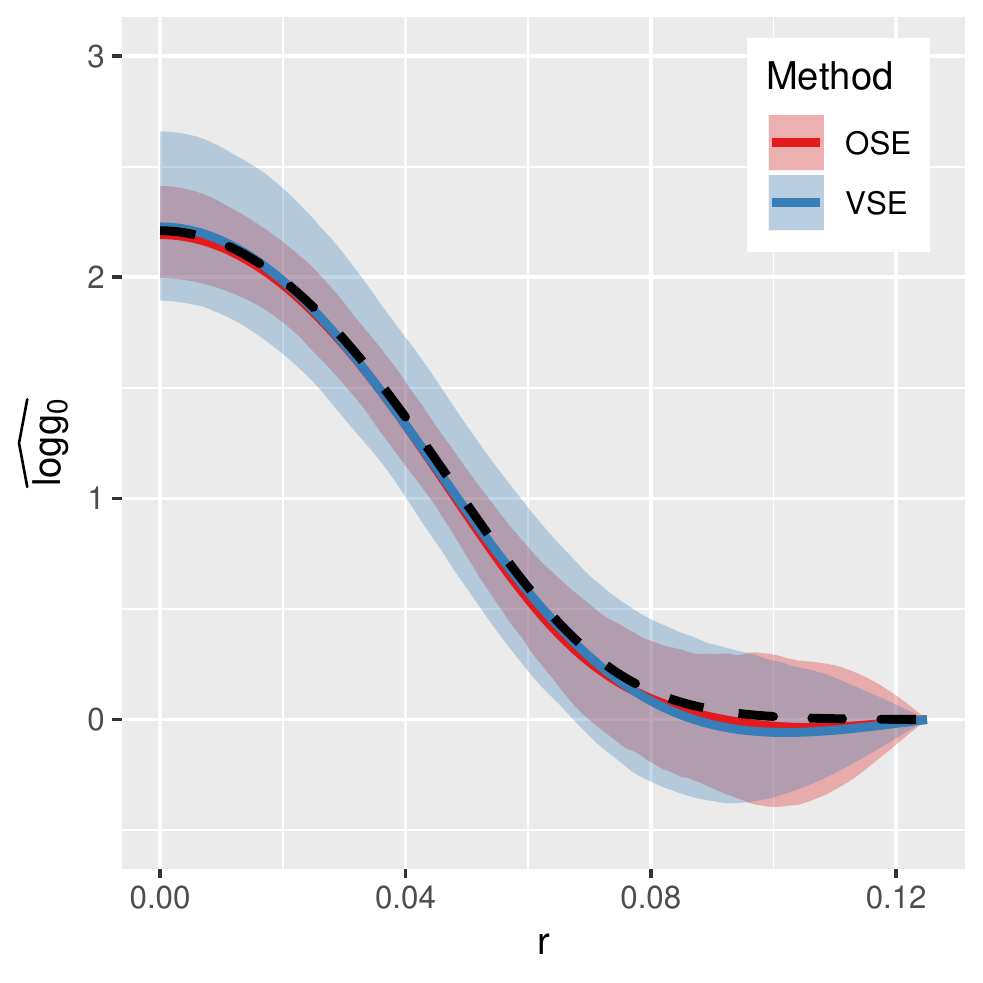}
\end{tabular}
\caption{Mean VSE (red curves) and OSE (blue curves) of $g_0$ (first column) and $\log g_0$
  (right column) for Poisson (first row) and Thomas
  (second row) point
processes with $W=[0,2]^2$. In each plot, the dashed black curve is
the true pair correlation  or log pair correlation function. The
envelopes represent pointwise 95\% probability intervals for the estimates.}\label{fig:estimates1}
\end{figure}

\begin{figure}[t!] 
\centering
\begin{tabular}{cc}
\includegraphics[scale=.55]{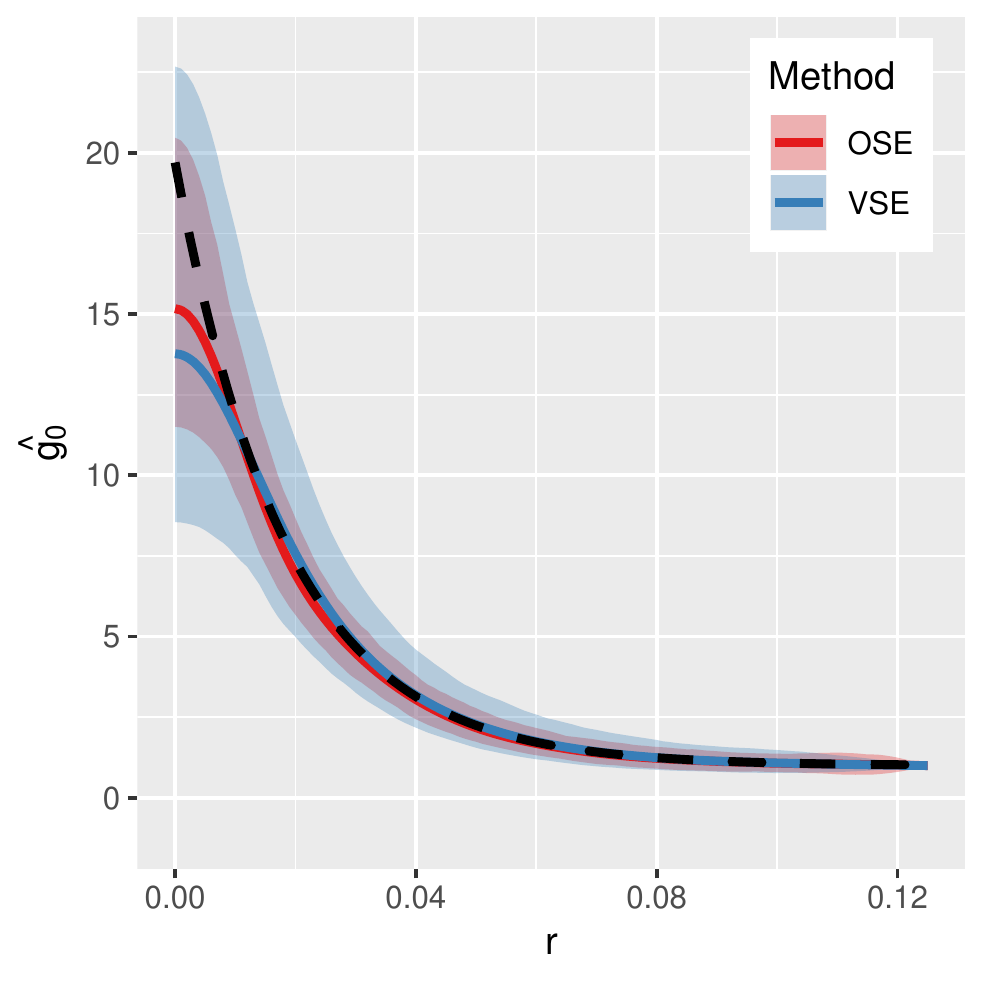}&
\includegraphics[scale=.55]{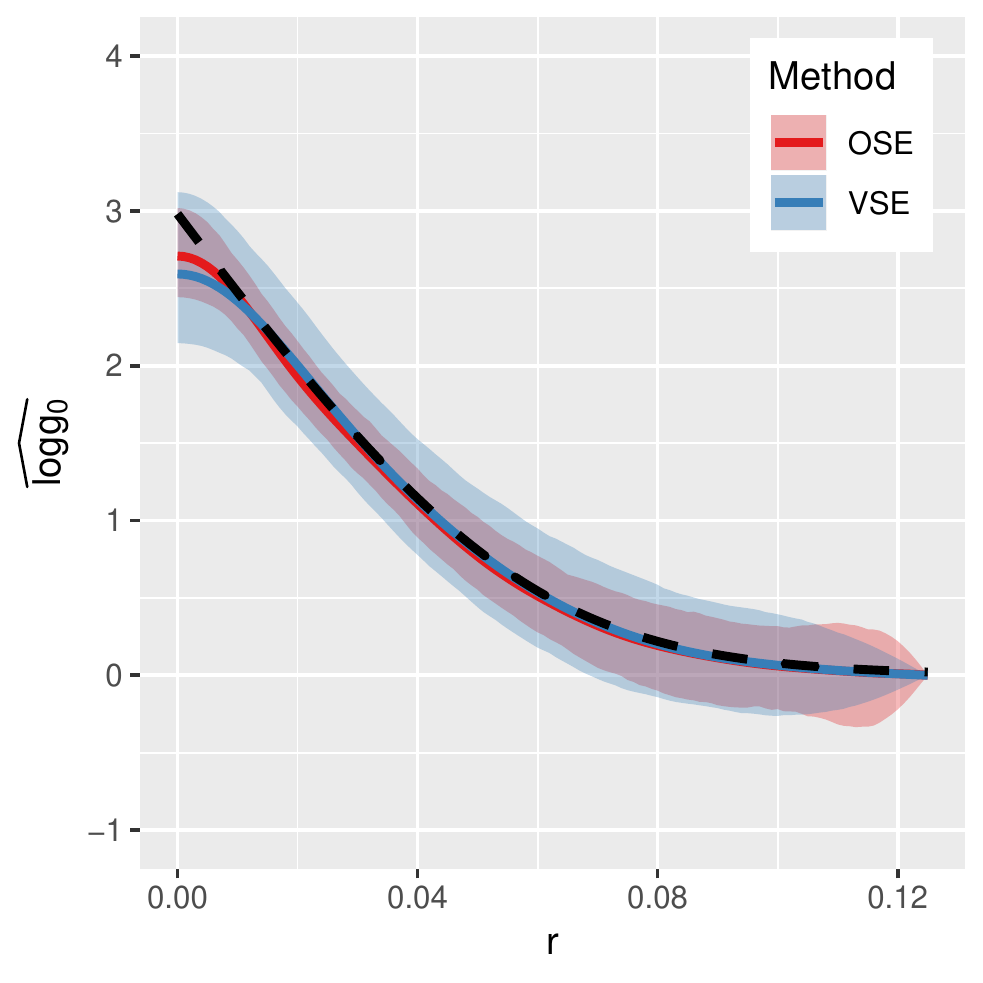}\\
\includegraphics[scale=.55]{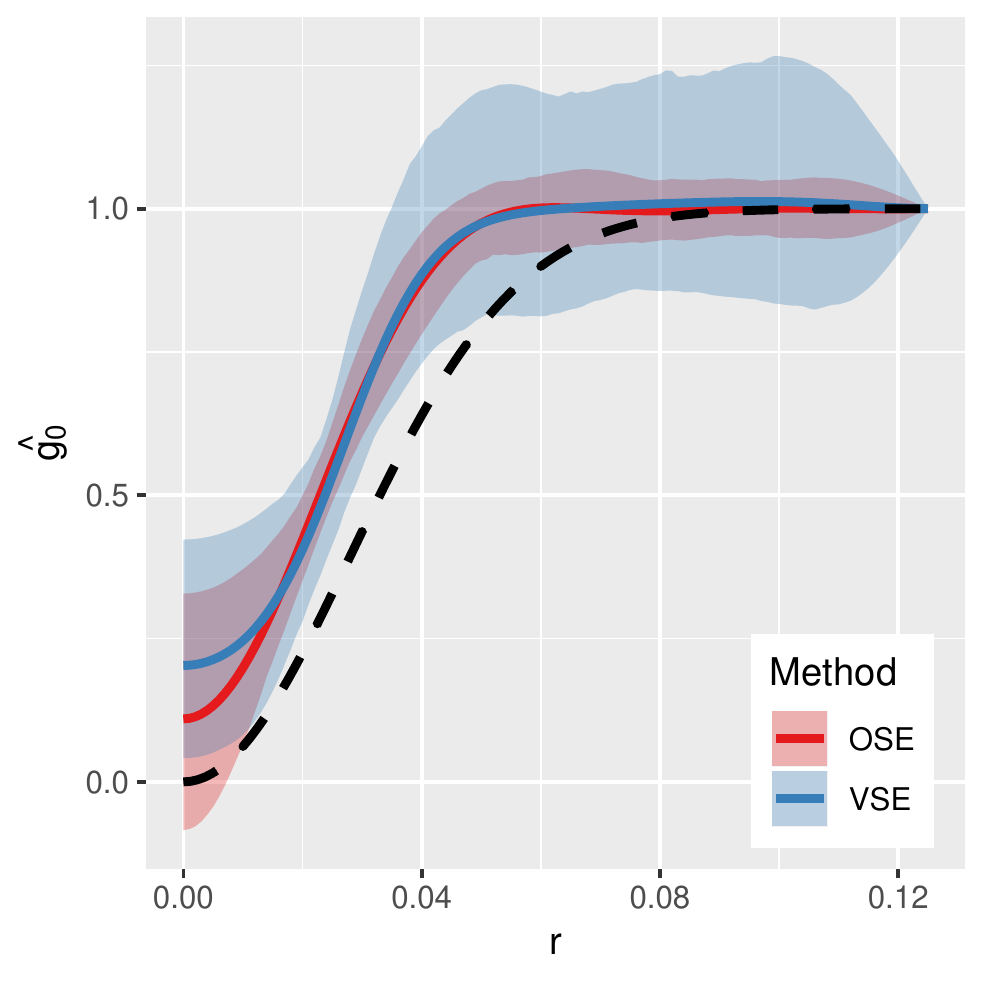}&
\includegraphics[scale=.55]{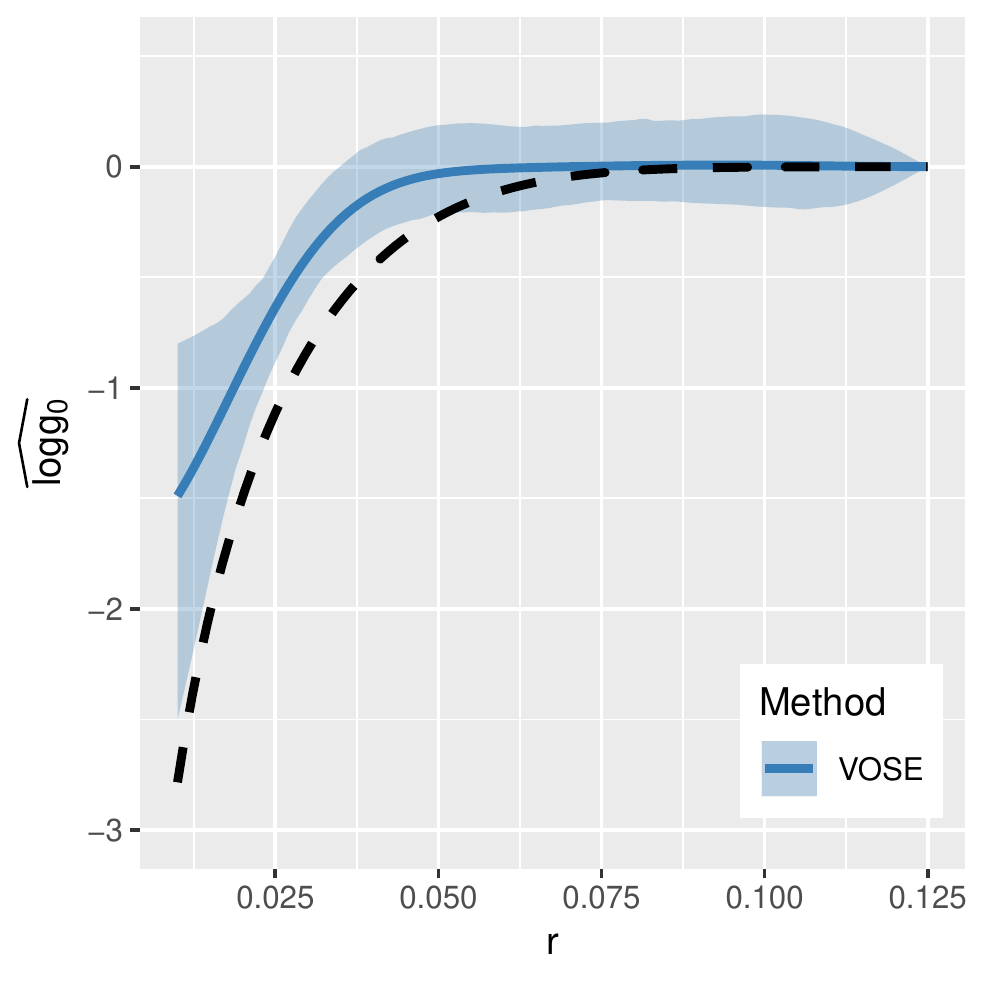}
\end{tabular}
\caption{
Mean VSE (red curves) and OSE (blue curves) of $g_0$ (first column) and $\log g_0$
  (right column) for variance gamma (first row) and determinantal
  (second row, $r_{\min}=0.01$) point
processes with $W=[0,2]^2$. In each plot, the dashed black curve is
the true pair correlation  or log pair correlation function. The
envelopes represent pointwise 95\% probability intervals for the
estimates.}\label{fig:estimates2}
\end{figure}

Table~\ref{tbl:mise} summarizes the root MISE (square root of \eqref{eq:miseo}) for the three estimators across the four models. 
\begin{table}[!htb]
\begin{center}
\begin{tabular}{lllll}
                        & Window $\quad$   &  OSE $\quad \quad$&  VSE $\quad \quad$ & KDE $\quad \quad$ \\ 
                        \hline
Poisson                     & $[0,1]^2$ &  0.027   (2.1)                             &    0.051 (2.2)  &    0.093   \\
                            & $[0,2]^2$ &  0.012     (2.0)                           &    0.024  (2.2) &    0.037   \\ 
                             &&&&\\
Thomas & $[0,1]^2$ &  0.0995    (3.7)                          &    $0.1418^\star$  (2.7) &    0.111   \\
                            & $[0,2]^2$ &  0.044   (4.2)                             &    0.063  (2.9) &    0.053   \\
                             &&&&\\
Variance Gamma              & $[0,1]^2$ &  0.099    (6.5)                            &    0.148  (3.8) &    0.110   \\
                            & $[0,2]^2$ &  0.050     (9.6)                           &    0.072 (5.3)  &    0.057   \\ 
                             &&&&\\

DPP        				     & $[0,1]^2$ &  NA     (3)                             &     0.1622 (3.6)  &    NA   \\ 
                            & $[0,2]^2$ &  NA   (4.1)                               &     0.1582  (5.2) &    NA \\
                            \hline
\end{tabular}
\caption{Square-root of the MISE  for different estimates of $\log
  g_0$, observation windows and models. The figures between brackets
  correspond to the average of the selected $K$s. The NA's
  are due to occurrence of non-positive estimates.  (${ }^\star$: in this setting one replication produced an outlier and is omitted in the root MISE estimation)}\label{tbl:mise}
\end{center}
\end{table}
The root MISEs are larger for the variational estimator than for the OSE and the
KDE except in the Poisson case where the KDE
has larger MISE than the VSE.
Table~\ref{tbl:mise} also reports the average of the selected $K$ for the
variational estimator and the OSE.
The averages of the selected $K$'s are pretty similar for the  Poisson and DPP models while the OSE tends to select higher $K$ than the variational method for
the Thomas and variance Gamma point processes. 


We have also compared the computing time to evaluate the OSE and VSE. The OSE is generally cheaper except when the number of
points and $R$ are large, see also the case of \emph{Capparis} in Section~\ref{sec:dataexample}.


\subsection{Data example}\label{sec:dataexample}

We apply the three estimators to the data example considered in \cite{jalilian:guan:waagepetersen:17}. That is, we consider point patterns of locations of \emph{Acalypha diversifolia} 
(528 trees), \emph{Lonchocarpus heptaphyllus} (836 trees) and 
\emph{Capparis frondosa} (3299 trees) species  in the 1995 census for 
the $1000\text{m}\times 500\text{m}$ Barro Colorado Island plot \citep{hubbell:foster:83,condit:hubbell:foster:96,condit:98}.
The intensity functions for the point patterns are estimated as in \cite{jalilian:guan:waagepetersen:17} using log-linear regression models depending on various soil and topographical variables. The estimated pair correlation functions are shown in Figure~\ref{fig:pcfsdata}.
\begin{figure}
\begin{tabular}{ccc}
\includegraphics[scale=.47]{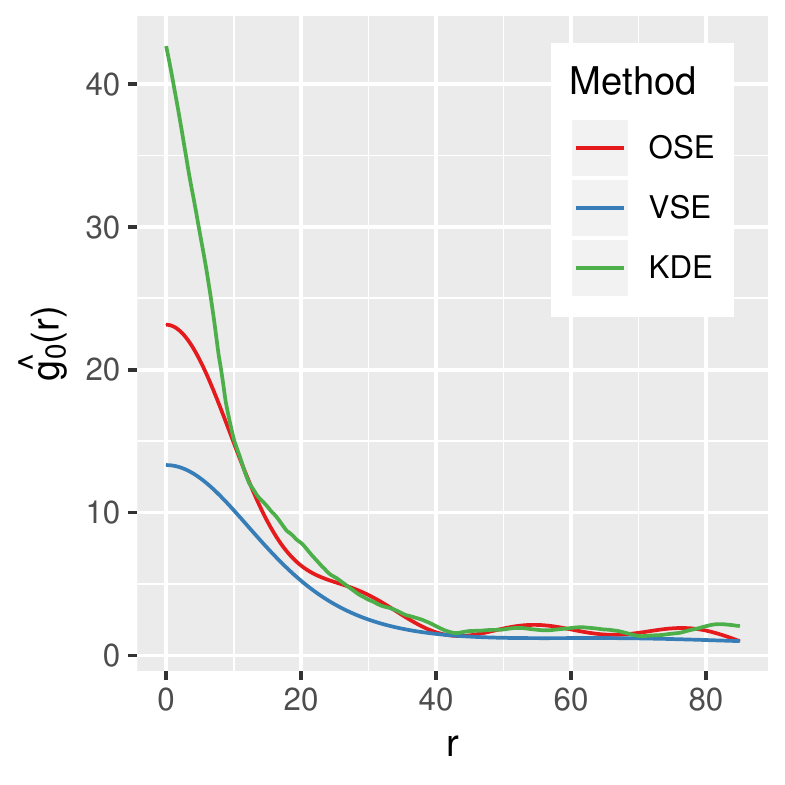} & \includegraphics[scale=.47]{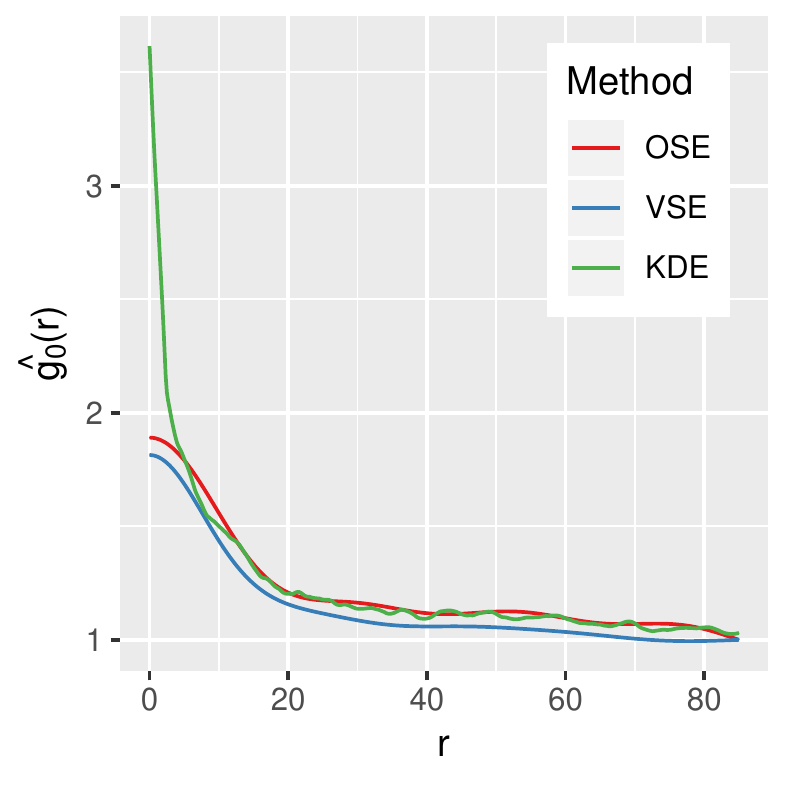} & \includegraphics[scale=.47]{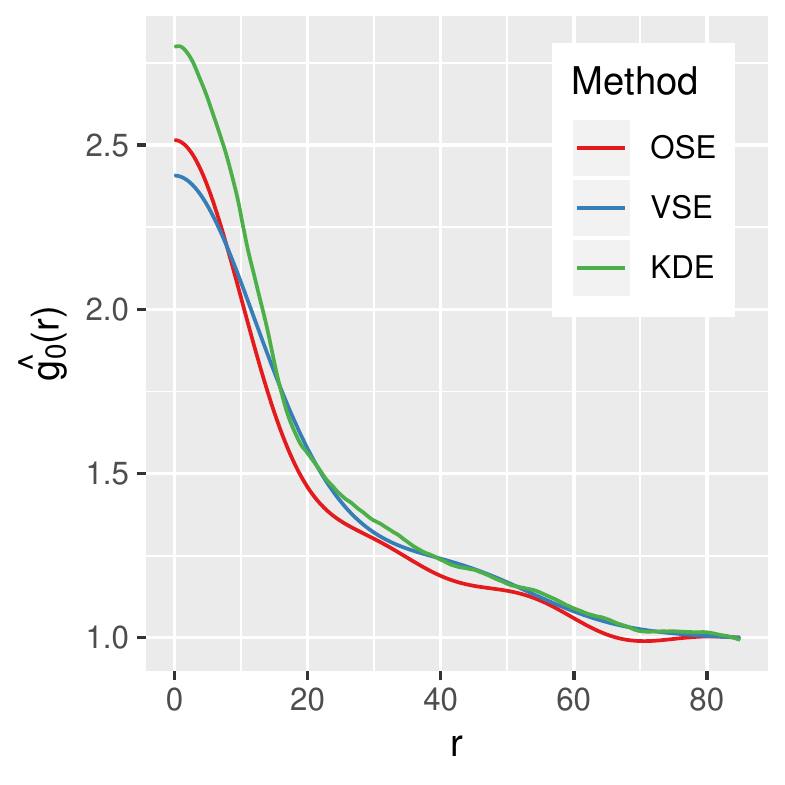}
\end{tabular}
\caption{Estimates of $g_0$ for the three species \emph{Acalypha} (left), \emph{Capparis} (middle) and \emph{Lonchocarpus} (right).}\label{fig:pcfsdata}
\end{figure}
In case of  \emph{Capparis} and \emph{Lonchocarpus}, VSE and OSE are quite similar while the kernel estimate is markedly different
from the other estimates for small lags. For \emph{Acalypha}, all estimates differ for small lags. The three estimates are very similar
for large spatial lags for all species. The selected number $K$ for the VSE are 3, 9 and 5 for \emph{Acalypha}, \emph{Capparis},
and \emph{Lonchocarpus}, while  OSE selects $K = 7$ for all species. In the case of \emph{Capparis}, the computation time (4200 seconds) is higher for the OSE than for the VSE (1244 seconds) due to the high number of points for this species.

\section{Discussion} \label{sec:discussion}

In this paper we derive variational equations based on  second order properties of a spatial point process. 
It is remarkable that in case of log-linear parametric models for the
pair correlation function, it is possible to derive variational
estimating equations which have closed form solutions for the unknown
parameters. We exploit this to construct new variational orthogonal
series type estimators for the pair correlation function. In contrast
to previous kernel and orthogonal series estimators, our new estimate
is guaranteed to be non-negative.
For large data sets,  the new estimator is further computationally faster than the previous orthogonal series estimate. However, in terms of accuracy as measured by MISE, the new estimator does not outperform the previous estimators. In the data example, the new estimator and the OSE gave similar results.

We believe there is further scope for exploring  variational equations. For instance, one could use non-orthogonal bases for expanding the log pair correlation function instead of the orthogonal Fourier-Bessel basis used in this work. One might e.g.\ in future work consider so-called frames \citep{christensen:08} or spline bases.\\[\bsl]
{\bf Acknowledgments}\\[\bsl]
Rasmus Waagepetersen's and Francisco Cuevas-Pachecho's research was supported by The Danish Council
for Independent Research | Natural Sciences, grant DFF – 7014-00074 "Statistics for point processes in space and beyond", and by the "Centre for Stochastic Geometry and 
Advanced Bioimaging", funded by grant 8721 from the Villum Foundation. The research of J.-F. Coeurjolly is funded by the Natural Sciences and Engineering Research Council of Canada.

The BCI forest dynamics research project was made possible by National
Science Foundation grants to Stephen P. Hubbell: DEB-0640386,
DEB-0425651, DEB-0346488, DEB-0129874, DEB-00753102, DEB-9909347,
DEB-9615226, \linebreak DEB-9615226, DEB-9405933, DEB-9221033, DEB-9100058,
DEB-8906869, \linebreak DEB-8605042, DEB-8206992, DEB-7922197, support from the
Center for Tropical Forest Science, the Smithsonian Tropical Research Institute, the John D. and Catherine T. MacArthur Foundation, the
Mellon Foundation, the Celera Foundation, and numerous private
individuals, and through the hard work of over 100 people from 10
countries over the past two decades. The plot project is part of the Center for Tropical Forest Science, a global network of large-scale demographic tree plots.

The BCI soils data set were collected and analyzed by J.\ Dalling,
R.\ John, K.\ Harms, R.\ Stallard and J.\ Yavitt  with support from
NSF DEB021104, 021115, 0212284, 0212818 and OISE 0314581, STRI and
CTFS. Paolo Segre and Juan Di Trani provided assistance in the
field. The covariates \texttt{dem}, \texttt{grad}, \texttt{mrvbf},
\texttt{solar} and \texttt{twi} were computed in SAGA GIS by Tomislav
Hengl \linebreak (\texttt{http://spatial-analyst.net/}).

\bibliographystyle{plainnat}
\bibliography{var}

\appendix

\section{Proof of Theorem~\ref{thm:VE}}  \label{sec:A}

\begin{proof}
Using  the Campbell theorem \eqref{eq:campbell} and since $\nabla \log g=(\nabla g)/g$, we start with
\begin{align*}
A :=\EE   \bigg\{ \sum_{u,v \in \bX\cap W }^{\neq} &e(u,v)   \nabla \log g(v-u) \cdot h(v-u)\bigg\} \\ 
&=\int_W \int_W \frac{1}{|W \cap W_{v-u}|} \frac{\nabla g(v-u) \cdot h(v-u)}{g(v-u)\rho(u)\rho(v)} \rho^{(2)}(u,v) \dd u \dd v \\
&=\int_W \int_W \frac{\nabla g(v-u) \cdot h(v-u)}{|W \cap W_{v-u}|}  \dd u \dd v.
\end{align*} 
Using first the invariance by translation of $h$ and $\nabla g$, second Fubini's theorem, and third a change of variables, this reduces to
\[
  A = \int_{\R^d}  \nabla g(w) \cdot h(w) \dd w.
\]
By assumption, we have using the dominated convergence theorem,
\[
  A = \lim_{n\to \infty} A_n \quad \text{ where } A_n:=\int_{B_n} \nabla g(w) \cdot h(w)\dd w.
\]
We can now use the standard trace theorem (see e.g. \cite{evans:gariepy:92}) and obtain
\[
  A_n  = -\int_{B_n} g(w) (\mathrm{div} \,h) (w)\dd w + \int_{\partial B_n} g(w) h(w) \cdot \nu(\dd w).
\]
From~\eqref{eq:condition_h}, we deduce from the dominated convergence
theorem that 
\[
  A = \lim_{n\to \infty} A_n = -\int_{\R^d} g(w) (\mathrm{div} \,h)(w)  \dd w.\\
\]
Finally, using successively a change of variable and the Campbell theorem we get
\begin{align*}
  A &= - \int_W \int_W \frac{(\mathrm{div} \,h)(v-u)}{|W\cap W_{v-u}|} \; \frac{\rho^{(2)}(u,v)}{\rho(u)\rho(v)} \dd u \dd v\\
  &= -\EE \left\{ \sum_{u,v \in \bX\cap W }^{\neq} e(u,v) \, (\mathrm{div} \,h)(v-u)\right\}
\end{align*}
which proves~\eqref{eq:VE1}. 
\end{proof}

\section{Proof of Theorem~\ref{thm:VEisotropic}} \label{sec:B}

\begin{proof}
Both \eqref{eq:VE_iso1} and \eqref{eq:VE_iso2} are proved
similarly. We focus only on~\eqref{eq:VE_iso2} and follow the proof of
Theorem~\ref{thm:VE}. Using the Campbell theorem~\eqref{eq:campbell},
the fact $(\log g_0)^\prime=g_0^\prime/g_0$ and finally a
change to polar coordinates, we have
\begin{align*}
A :=\EE   \bigg\{ &\sum_{u,v \in \bX\cap W }^{\neq} e(u,v)   (\log g_0)^\prime(\|v-u\|) h(\|v-u\|)\bigg\} \\ 
&=\int_W \int_W \frac{1}{|W \cap W_{v-u}| } \frac{g_0^\prime(\|v-u\|) h(\|v-u\|)}{g_0(\|v-u\|)\rho(u)\rho(v)} \rho^{(2)}(u,v) \dd u \dd v \\
&=\int_W \int_W \frac{g_0^\prime(\|v-u\|) h(\|v-u\|)}{|W \cap W_{v-u}| }  \dd u \dd v \\
&=  \int_{\R^d} g_0^\prime(\|w\|)h(\|w\|) \dd w \\
& =  \sad \int_0^\infty t^{d-1} g_0^\prime(t)h(t) \dd t.
\end{align*}  
Using the dominated convergence theorem, partial integration and~\eqref{eq:condition_h_iso2} we have
\begin{align*}
  \int_0^\infty t^{d-1} g_0^\prime(t)h(t) \dd t&= \lim_{n\to \infty} \int_0^n t^{d-1}g_0^\prime(t)h(t) \dd t \\
  &= - \lim_{n\to \infty} \int_0^n t^{d-1} g_0(t) \left\{ \frac{(d-1)h(t)}{t} +h^\prime(t) \right\} \dd t \\
  &= -\int_0^\infty t^{d-1} g_0(t)\left\{ \frac{(d-1)h(t)}{t} +h^\prime(t) \right\} \dd t.
\end{align*}
A change to polar coordinates and the Campbell theorem again lead to 
\begin{align*}
A &= -  \int_{R^d} g_0(\|w\|) \left\{ \frac{(d-1)h(\|w\|)}{\|w\|} +h^\prime(\|w\|) \right\} \dd w \\   
&= -\int_W \int_W \left\{ \frac{(d-1)h(\|w\|)}{\|w\|} +h^\prime(\|w\|) \right\} \frac{\rho^{(2)}(u,v)}{\rho(u)\rho(v)|W\cap W_{v-u}|} \dd u \dd v\\
&=    - \EE \left[ \sum_{u,v \in \bX \cap W }^{\neq} e(u,v) \left\{ (d-1)\frac{h(\|v-u\|)}{\|v-u\|} + h^\prime(\|v-u\|) \right\}
    \right].
\end{align*}  
\end{proof}


\end{document}